\newcommand{\viewA}{viewA\xspace}
\newcommand{\viewB}{viewB\xspace}
\newcommand{\fino}[1]{{\tt #1}}
\newcommand{\BFTlong}{Byzantine fault tolerance\xspace}
\newcommand{\BFT}{BFT\xspace} 
\newcommand{\OTR}{commit-reveal\xspace}
\newcommand{\Disperse}[1]{Disperse(#1)\xspace}
\newcommand{\Retrieve}[1]{Retrieve(#1)\xspace}
\begin{document}
    \title{Maximal Extractable Value (MEV) Protection on a DAG}
    \author{Dahlia Malkhi and Pawel Szalachowski\\
        \small{Chainlink Labs}\\
    }

\else
    \title{Presentation and publication: Maximal Extractable Value (MEV) Protection on a DAG}
    \bibliographystyle{plainurl}% the mandatory bibstyle
    \ccsdesc{Distributed computing methodologies}
    %\titlerunning{Dummy short title} %TODO optional, please use if title is longer than one line
    \author{Dahlia Malkhi}{Chainlink Labs}{}{}{}%TODO mandatory, please use full name; only 1 author per \author macro; first two parameters are mandatory, other parameters can be empty. Please provide at least the name of the affiliation and the country. The full address is optional. Use additional curly braces to indicate the correct name splitting when the last name consists of multiple name parts.
    \author{Pawel Szalachowski}{Chainlink Labs}{}{}{}
    \authorrunning{D. Malkhi and P. Szalachowski} %TODO mandatory. First: Use abbreviated first/middle names. Second (only in severe cases): Use first author plus 'et al.'
    \Copyright{Dahlia Malkhi and Pawel Szalachowski} %TODO mandatory, please use full first names. LIPIcs license is "CC-BY";  http://creativecommons.org/licenses/by/3.0/
    \keywords{DAG, MEV, consensus, BFT} %TODO mandatory; please add comma-separated list of keywords
    \category{Publication and presentation} %optional, e.g. invited paper
    \relatedversion{} %optional, e.g. full version hosted on arXiv, HAL, or other respository/website
    %\relatedversiondetails[linktext={opt. text shown instead of the URL}, cite=DBLP:books/mk/GrayR93]{Classification (e.g. Full Version, Extended Version, Previous Version}{URL to related version} %linktext and cite are optional
    %\supplement{}%optional, e.g. related research data, source code, ... hosted on a repository like zenodo, figshare, GitHub, ...
    %\supplementdetails[linktext={opt. text shown instead of the URL}, cite=DBLP:books/mk/GrayR93, subcategory={Description, Subcategory}, swhid={Software Heritage Identifier}]{General Classification (e.g. Software, Dataset, Model, ...)}{URL to related version} %linktext, cite, and subcategory are optional
    %\funding{(Optional) general funding statement \dots}%optional, to capture a funding statement, which applies to all authors. Please enter author specific funding statements as fifth argument of the \author macro.
    \acknowledgements{We are grateful to Soumya Basu, Christian Cachin, Ari Juels, Mahimna Kelkar, Lefteris Kokoris-Kogias, Oded Naor, Mike Reiter for many comments that helped improve this writeup.}%optional
    %\nolinenumbers %uncomment to disable line numbering
    %Editor-only macros:: begin (do not touch as author)%%%%%%%%%%%%%%%%%%%%%%%%%%%%%%%%%%
    \EventEditors{John Q. Open and Joan R. Access}
    \EventNoEds{2}
    \EventLongTitle{4th International Conference on Blockchain Economics Security
    and Protocols}
    \EventShortTitle{Tokenomics 2022}
    \EventAcronym{Tokenomics}
    \EventYear{2022}
    \EventDate{December 13--14, 2022}
    \EventLocation{Paris, France}
    \EventLogo{}
    \SeriesVolume{42}
    \ArticleNo{23}
    %%%%%%%%%%%%%%%%%%%%%%%%%%%%%%%%%%%%%%%%%%%%%%%%%%%%%%
    %--------------------------------------------------------------------------------
    \begin{document}
\fi

\begin{abstract}
    Many cryptocurrency platforms are vulnerable to Maximal Extractable Value (MEV) attacks~\cite{daian2020flash}, where a malicious consensus leader can inject transactions or change the order of user transactions to maximize its profit.  

A promising line of research in MEV mitigation is to enhance the \BFTlong (\BFT) consensus core of blockchains by new functionalities, 
like hiding transaction contents, such that malicious parties cannot analyze and exploit them until they are ordered. 
    An orthogonal line of research demonstrates excellent performance for \BFT
    protocols designed around Directed Acyclic Graphs (DAG). They provide high
    throughput by keeping high network utilization, decoupling transactions'
    dissemination from their metadata ordering, and encoding consensus logic
    efficiently over a DAG representing a causal ordering of disseminated
    messages.

This paper explains how to combine these two advances.  It introduces a DAG-based protocol called Fino, that integrates MEV-resistance features into DAG-based \BFT without delaying the steady spreading of transactions by the DAG transport and with zero message overhead.
The scheme operates without complex secret share verifiability or recoverability, and avoids costly threshold encryption.

\end{abstract}

%--------------------------------------------------------------------------------

% TODO: replace this section with code generated by the tool at https://dl.acm.org/ccs.cfm
%\begin{CCSXML}
%<ccs2012>
%<concept>
%<concept_id>10002978.10003029.10011703</concept_id>
%<concept_desc>Security and privacy~Usability in security and privacy</concept_desc>
%<concept_significance>500</concept_significance>
%</concept>
%</ccs2012>
%\end{CCSXML}

%\ccsdesc{Security and privacy~Use https://dl.acm.org/ccs.cfm to generate actual concepts section for your paper}
% -- end of section to replace with generated code

% \keywords{template; formatting; pickling} % TODO: replace with your keywords

\maketitle

\section{Introduction}
\label{sec:intro}
% Many cryptocurrency platforms are vulnerable to blockchain extractable value (MEV) attacks~\cite{daian2020flash}, 
% where a malicious consensus leader can inject transactions or change the order of user transactions to maximize its profit.

% This paper presents another advantage of the DAG-based approach to BFT,
% enabling simple and smooth prevention of MEV exploits.
% It introduces a DAG-based protocol called Fino, 
% that leverages the DAG structure to achieve efficient MEV mitigation through {\em Blind Order-Fairness}.
% Integrating two additional lines of defense in a DAG is briefly discussed, Time-Based Order-Fairness and Participation Fairness.

% -----*-------*---------*--------*---------------
\subsection{MEV}
Over the last few years, we have seen exploding interest in cryptocurrency
platforms and applications built upon them, like decentralized finance protocols
offering censorship-resistant and open access to financial instruments; or
non-fungible tokens.
Many of these systems are vulnerable to MEV attacks, where a malicious consensus
leader can inject transactions or change the order of user transactions to maximize its
profit. Thus it is not surprising that at the same time we have witnessed
rising phenomena of MEV professionalization, where an entire ecosystem of MEV
exploitation, comprising of MEV opportunity "searchers" and collaborating
miners, has arisen.

Daian et al.~\cite{daian2020flash} introduced a measure
of the ``profit that can be made through including, excluding, or re-ordering transactions within blocks''.
The original work called the measure miner extractable value, which was later
extended by maximal extractable value (MEV)~\cite{obadia2021unity} and blockchain extractable value (BEV)~\cite{qin2021quantifying}, 
to include other forms of attacks, not necessarily performed by miners.
At the time of this writing, an ``MEV-explore''
tool~\cite{mev-explore}
estimates the amount of MEV extracted on Ethereum since the 1st of Jan 2020 to be close to \$700M. However,
it is safe to assume that the total MEV extracted is much higher, since
MEV-explore limits its estimates to only one blockchain, a few protocols, and a
limited number of detectable MEV techniques.  Although it is difficult to argue
that all MEV is "bad" (e.g., market arbitrage can remove market inefficiencies),
it usually introduces some negative externalities like:
%
% \begin{itemize}
% \item 
        \textit{network congestion}: especially on low-cost chains, MEV actors often try to
  increase their chances of exploiting an MEV opportunity by sending a lot of
  redundant transactions, spamming the underlying peer-to-peer network;
% \item 
        \textit{chain congestion}: many such transactions finally make it to the chain, making
  the chain more congested;
% \item 
        \textit{higher blockchain costs}: while competing for profitable MEV opportunities, MEV
  actors bid higher gas prices to prioritize their transactions, which results in
  overall higher blockchain costs for regular users;
% \item 
        \textit{consensus stability}: some on-chain transactions can create such a lucrative MEV
  opportunity that it may be tempting for miner(s) to create an alternative
  chain fork with such a transaction extracted by them, which
  introduces consensus instability risks.
% \end{itemize}

% -----*-------*---------*--------*---------------
\subsection{Blind Order-Fairness on a DAG}
\label{sec:intro:bof-dag}

A promising way of thwarting some forms of MEV on blockchains is to extend their \BFT (\BFTlong) consensus core 
by new properties like Blind Order-Fairness, where no consensus "validator" can 
learn the content of transactions until they are committed to a total ordering.
This notion of \emph{``\OTR''} is the key mechanism we focus on in this paper.

An orthogonal line of research concentrates on scaling the \BFT consensus 
of blockchains via a {\em DAG communication substrate}.
A DAG transport can spread yet-unconfirmed transactions in parallel, every message having utility and carrying transactions that eventually become committed to a total ordering. 
The DAG provides Reliability, Non-equivocation, and Causal-ordering delivery guarantees, 
enabling a simple and "zero-overhead" consensus ordering on top of it. 
That is, validators can interpret their DAG locally without exchanging more messages and determine a total ordering of accumulated transactions.
Several recent DAG-based \BFT systems, 
including Blockmania~\cite{danezis2018blockmania}, Aleph~\cite{gkagol2018aleph}, Narwhal~\cite{danezis2022narwhal},
DAG-Rider~\cite{keidar2021all}, and Bullshark~\cite{giridharan2022bullshark},
demonstrate excellent performance.

This paper combines these two advances in a solution called Fino, that integrates smoothly two ingredients: 
A framework for \OTR consensus protocols, and a simple DAG-based BFT consensus that 
leverages the Reliability, Non-Equivocation, and Causality properties of the DAG transport to integrate the \OTR framework smoothly and efficiently into consensus.

\textbf{A \OTR framework.} The core of the \OTR framework of Fino is a standard k-of-n secret sharing approach
consisting of two functionalities, \Disperse{} and \Retrieve{}. 
Users first send transactions to validators {\em encrypted}, 
so that the consensus protocol commits to an ordering of transactions {\em blindly}.
\Disperse{} entrusts shares of the secret decryption key for each transaction to validators.
After an ordering is committed, \Retrieve{} opens encrypted transactions by collecting
shares from $F+1$ validators.

Implementing \Disperse{}/\Retrieve{} efficiently enough to meet DAG-based consensus throughput is the core challenge Fino addresses.
Two straw man approaches are (i) threshold cryptography, but it would incur an order of milliseconds per transaction, (ii) using polynomial secret-sharing and verifying shares during dispersal, 
also requiring costly cryptography as well as implementing a share-recovery protocol.

The third approach, and the one we recommend using,
shares transaction encryption keys using Shamir's secret-sharing scheme~\cite{shamir1979share}, but
without secret share verifiability or recoverability (slow and complex), 
and without threshold encryption (slow).
Instead, 
to guarantee a unique, deterministic outcome of encrypted transactions,
Fino borrows a key insight from DispersedLedger~\cite{yang2022dispersedledger} called AVID-M, 
protecting against share manipulation attacks by verifying that the sharing was correct {\em post reconstruction}.
If the dispersal was incorrect and fails post-reconstruction verification, 
the transaction is rejected.

Notably, this approach works in microseconds latency and is orders of magnitude faster than threshold encryption.

\autoref{sec:framework} describes a fourth approach, a hybrid combining AVID-M with threshold cryptography. Hybrid tackles adversarial conditions where \Disperse{} partially fails, at the cost of a slower post-reconstruction verification.

We repeat that all four approaches above are compatible with the Fino framework, though the fastest and simplest one is AVID-M. 

\textbf{DAG-based integration.} Fino integrates \Disperse{}/\Retrieve{} into a DAG-based consensus 
without delaying the steady spreading of transactions by the DAG transport and with zero message overhead.  
Validators interpret their local DAGs to arrive at commit decisions on blind ordering. 
After an ordering has committed for a batch of transactions, 
validators invoke \Retrieve{} and present decryption shares piggybacked on DAG broadcasts. 
Finally, validators interpret their local DAGs to arrive at unanimous, deterministic outcome of transaction opening.

During periods of synchrony, Fino has a happy path for transaction commit of two DAG latencies, plus a single DAG latency for determining an opening outcome.

\section{Background and Preliminaries}
\label{sec:pre}
% -----*-------*---------*--------*---------------
\subsection{System Model} %BFT SMR problem, partial synchrony model}

The \BFTlong (\BFT) Consensus core of blockchains implements state machine replication (SMR),
where non-faulty parties--also known as validators--agree on an order of transactions and execute them consistently. 
The goal of our work is to extend \BFT consensus and build into it
protection against MEV attacks which rely on transaction content analysis.

We assume that the system should be resilient to
Byzantine faults, i.e., faulty validators can implement any adversarial behavior
(like crashing, equivocating, or colluding).
We assume a partially synchronous network~\cite{dwork1988consensus}, where asynchronous periods
last up to unknown global stabilization time (GST), while in synchronous periods
there is a known bound $\Delta$ on delays for message delivery.
This model requires a \BFT threshold $N \geq 3F + 1$, where
$N$ and $F$ is the number of all and Byzantine faulty validators, respectively.

% -----*-------*---------*--------*---------------
\subsection{MEV and Blind Order-Fairness}
\label{sec:pre:mev}
% \psz{most of that is in Sec 1.1, shall we move it here or drop this section?}
In this paper, we focus on consensus-level MEV mitigation techniques.
There are fundamentally two types of MEV-resistant Order-Fairness properties:

{\bf Blind Order-Fairness.}
A principal line of defense against MEV
stems from committing to transaction ordering without seeing transaction contents.
This notion of MEV resistance, referred to here as Blind Order-Fairness,
is used in a recent SoK on Preventing Transaction Reordering by Heimbach and
Wattenhofer~\cite{heimbach2022sok},
and is defined as:
{\em
"when it is not possible for any party to include or exclude transactions
after seeing their contents. Further, it should not be possible for any party
to insert their own transaction before any transaction whose contents it
already been observed."
}

{\bf Time-Based Order-Fairness.}
Another measure for MEV protection is brought by sending transactions to all \BFT parties simultaneously and using
the relative arrival order at a majority of the parties to determine the final ordering.
In particular, this notion of order fairness ensures that
{\em
"if sufficiently many parties receive a
transaction tx before another tx', then in the final commit order tx' is not sequenced before tx."
}

This defines the transactions order and  prevents adversaries that can analyze network traffic and
transaction contents from reordering, censoring, and front-/back-running
transactions received by Consensus parties. Moreover, Time-Based Order-Fairness
protects against a potential collusion between users and \BFT leaders/parties because parties explicitly input relative ordering into the protocol.
Time-Based Order-Fairness is used in various flavors in several recent works, including
Pompe~\cite{zhang2020byzantine},
Aequitas~\cite{kelkar2020order},
Themis~\cite{kelkar2021themis},
"Wendy Grows Up"~\cite{kursawe2021wendy}, and
"Quick Order Fairness"~\cite{cachin2021quick}.
We briefly discuss some of those protocols in \autoref{sec:related}.

Another notion of fairness found in the literature, that does not provide Order-Fairness, 
revolves around participation fairness:

{\bf Participation Fairness.}
A different notion of fairness aims to ensure 
censorship-resistance or stronger notions of participation equity.
Participation Fairness guarantees that the committed sequence of transactions includes a certain portion of honest contribution (aka "Chain Quality").
Several \BFT protocols address Participation Fairness, including
Prime~\cite{amir2010prime},
Fairledger~\cite{lev2019fairledger},
HoneyBadger~\cite{miller2016honey}.
As mentioned in \autoref{sec:dag}, some DAG-based BFT protocols like 
Aleph~\cite{gkagol2018aleph}, 
DAG-Rider~\cite{keidar2021all},
Tusk~\cite{danezis2022narwhal},
and Bullshark~\cite{giridharan2022bullshark} 
use a layered DAG paradigm.
In this approach, Participation Fairness comes essentially for free because every DAG layer must include messages from 2F+1 participants.
It is worth noting that Participation Fairness does not prevent 
a corrupt party from injecting transactions after it has already observed other transactions,
nor a corrupt leader from reordering transactions after reading them,
violating both Blind and Time-Based Order-Fairness.

% -----*-------*---------*--------*---------------
\subsection{Threshold Methods}
Beside standard cryptographic primitives, like a symmetric encryption scheme or
a cryptographic hash function $H()$, 
we employ primitives specifically designed for hiding and opening transactions.

{\bf Secret Sharing}
For positive integers $k,n$, where $k\leq n$, a \textit{(k,n)-threshold scheme}
is a technique where a secret is distributed among $n$ parties , such that it
can be reconstructed by any $k$ parties, while any group of $k-1$ parties
cannot learn the secret.  The secret $S$ is selected by a trusted third party, called
\textit{dealer}, who splits it into $n$ individual \textit{shares} by calling:
$s_1, ..., s_n\leftarrow \fino{SS.Split(S)}$.
After at least $k$ parties exchange shares, they can combine them and recover the
secret by calling $S\leftarrow$~\fino{SS.Combine($s'_1, ..., s'_k$)}.

Shamir's Secret Sharing (SSS)~\cite{shamir1979share} is one of the first and most widely used 
\textit{(k,n)}-threshold schemes.  In SSS,
\fino{Split($S$)} chooses a polynomial $f$ of degree $k-1$ that hides $S$ at the origin point, and generates $n$ points on the polynomial.
Then, any $k$ points allow to interpolate the polynomial and compute its value at $f(0)$.

% -----*-------*---------*--------*---------------
{\bf Threshold Encryption.}
A \textit{(k,n)-threshold encryption} is a scheme where messages
encrypted under a single public encryption key $pk$ can be decrypted by a private
decryption key, shared among $n$ parties -- each with its secret key $sk_i$, where any group of $k$ parties can
decrypt a message.
The message $m$ is encrypted with an encryption algorithm:
$c\leftarrow$~\fino{TE.Enc($pk, m$)}.
We consider schemes where each party, upon receiving a ciphertext $c$, computes
its own \textit{decryption share} by invoking:
$ds_i\leftarrow$~\fino{TE.ShareGen($sk_i, c$)}; and any set of $k$ unique decryption
shares is enough to recover the plaintext, which we denote 
$m\leftarrow$\fino{TE.Dec($c, ds'_1, ..., ds'_k$)}.

Moreover, some threshold encryption schemes, like shown by Shoup and
Gennaro~\cite{shoup1998securing}, allow parties to verify that a given
decryption share corresponds to the ciphertext, which we denote by the \fino{TE.Verify($c, ds_i$)} function.  Note that, without such a
verification, it is trivial for a malicious party, providing an incorrect share,
to cause decryption failures.

\section{The \OTR Framework}
\label{sec:framework}
The core of blind fairness-ordering is a standard k-of-n secret sharing approach. 
For each transaction \fino{tx}, a user (“Alice”) picks a secret symmetric key \fino{tx-key}, and sends \fino{tx} encrypted with it to validators.
Secret sharing allows Alice to share \fino{tx-key} with validators such that $F+1$ shares are required to reconstruct \fino{tx-key}, and no set of $F$ bad validators can open it before it is committed (blindly) to the total order. 
Honest validators wait to commit to a blind order of transactions first, and only later open them. At the same time, before committing a transaction to the total ordering, validators ensure that they can open it.

More formally, a sharing protocol has two abstract functionalities, \Disperse{tx} and \Retrieve{tx}. \Disperse{} allows a dealer to disseminate shares of the secret \fino{tx-key} to validators, with the following guarantees:

\begin{description}

\item[Hiding:] A set of F bad validators cannot reconstruct \fino{tx-key}.
\item[Binding:] After \Disperse{tx-key} completes successfully at an honest validator, any set of honest $F+1$ validators doing \Retrieve{tx-key} can reconstruct \fino{tx-key} generating a unique outcome.
\item[Validity:] If the dealer is honest, then the outcome from \Retrieve{tx-key} by any $F+1$ honest validators is \fino{tx-key}.

\end{description}

It is worth noting that \Disperse{}/\Retrieve{} does not require certain properties which some use-cases in the literature may need, but not the framework here:

\begin{itemize}
\item When \Disperse{} completes, it does not guarantee that \Retrieve{} can reconstruct an output value that a dealer commits to. 
\item It doesn’t require that individual shares can be recovered.
\item It doesn’t require being able to use \fino{tx-key} to derive other values while keeping \fino{tx-key} itself secret.
\end{itemize}

We proceed to present four approaches for implementing \Disperse{}/\Retrieve{}. All of them are compatible with the DAG-based consensus protocol in the next section, though the fastest and simplest one, which we recommend using, is AVID-M. 

% --- * ---- * ---- * ----
\subsection*{Approach-1: Threshold Cryptography}

It is straight-forward to implement Disperse/Retrieve using threshold encryption, such that the public encryption \fino{TE.Enc()} is known to users and the private decryption \fino{TE.Dec()} is shared (at setup time) among validators. 
To \Disperse{tx-key}, the user attaches \fino{TE.Enc(tx-key)}, the transaction key encrypted with the global threshold key. 
Once a \fino{tx} is committed to the total-ordering, \Retrieve{tx-key} is implemented by each validator generating its decryption share via \fino{TE.ShareGen()}. 
Validators can try applying \fino{TE.Dec()} over $F+1$ valid decryption shares to decrypt \fino{tx}. 

Binding stems from two properties. First, a threshold of honest validators can always succeed in generating $F+1$ valid decryption shares. 
Second, as indicated in \autoref{sec:pre}, some threshold cryptography schemes allow verifying that a validator is contributing a correct decryption share, hence by retrieving $F+1$ valid threshold shares, a unique outcome is guaranteed.

Unfortunately, this method is computationally somewhat heavy: \Disperse{} takes about $300 \mu s$ to encrypt using TDH2 on standard hardware, and
\Retrieve{} takes about $3500 \mu s$ for in a 6-out-of-16 scheme (see \autoref{sec:implementation}).

% --- * ---- * ---- * ----
\subsection*{Approach-2: VSS}

Another way for users to \Disperse{tx-key} is Shamir’s secret sharing (SSS) scheme. 
To \Disperse{tx-key}, a user employs \fino{SS.Split(tx-key)} to send individual shares of \fino{tx-key} to each validator. 
A set of $F+1$ share holders reveal their shares during \Retrieve{} and combine
shares via \fino{SS.Combine()} to reconstruct \fino{tx-key}.

Combining shares is three orders of magnitude faster than threshold crypto and takes a few microseconds in today’s computing environment (see \autoref{sec:implementation}).
However, after \Disperse{} completes with $N-F$ validators, “vanilla”
\fino{SS.Split()/SS.Combine()} does not guarantee Binding: first, not every set of $F+1$ honest validators may hold shares from the \Disperse{} phase. Second, retrieving shares from different sets of $F+1$ validators may result in different outcome from \fino{SS.Combine()}.   

VSS schemes support Binding through a share-verification functionality \fino{VSS.Verification(share, commit value)}. 
Verification allows validators to check that shares are consistent with some committed value $S’$, 
such that any set of $F+1$ shares applied to \fino{SS.Combine()} results in output $S’$. 
When a validator receives a share, it should verify the share before acknowledging it. 
\Disperse{} completes when there are $N-F$ acknowledgements of verifiable shares, 
certifying that valid shares are held by $F+1$ honest validators and guaranteeing a unique reconstruction by any subset of $F+1$. 
It is possible to add a share-recovery functionality to allow a validator to obtain its individual share prior to \Retrieve{}. 

However, despite the vast progress in VSS schemes (see \autoref{sec:related:vss}), 
share recovery requires linear communication.
Additionally, implementing VSS requires non-trivial cryptography.

% -----*-------*---------*--------*---------------
\subsection*{AVID-M}
\label{sec:otr:avid}
VSS provides stronger guarantees than necessary to satisfy the Binding requirement of \OTR. 
Specifically, \Disperse{} only needs to guarantee a unique outcome from \Retrieve{}, not success. 
Borrowing a technique called AVID-M, which was introduced in DispersedLedger~\cite{yang2022dispersedledger} for reliable information dispersal,
validators can completely avoid verifying shares during \Disperse{}.
Instead, they verify during \Retrieve{} that the (entire) sharing would have a unique outcome, which is very cheap to do. This works as follows. 

To \Disperse{tx}, a user employs \fino{SS.Split(tx-key)} to send individual shares
to validators. This operation requires a deterministic way of
splitting secret shares which, in particular, can be based on validator
identifiers (e.g., $s_i=P(id_i)$, where $P$ is the polynomial hiding \fino{tx-key}).
Additionally, the user combines all shares in a Merkle-tree,
certifies the root, and sends with each share a proof of membership, i.e., a
Merkle tree path to the root.

When a validator receives a share, it should
verify the Merkle tree proof against the certified root before acknowledging the
share. \Disperse{} completes when there are $N-F$ acknowledgements, guaranteeing that
untampered shares are held by $F+1$ honest validators.

During \Retrieve{}, $F+1$  validators reveal individual shares, attaching the
Merkle tree path to prove shares have not been tampered with. Note that $F+1$
shares can be validated by checking only one signature on the root. Validators
use \fino{SS.Combine()} to try to reconstruct \fino{tx-key}.

Then they \emph{post-verify} that every subset of $F+1$ (untampered) shares sent to
validators would have the same outcome. To do this, a validator does not need to
wait for missing shares nor try combinations of $F+1$ shares. Moreover, it does not
need to communicate with others. A validator simply generates missing shares and
re-encodes the Merkle tree. Then, it compares the re-generated Merkle tree with
the root certified by the sender. If the comparison fails, the dealer was faulty
and the validator rejects \fino{tx}. The unique reconstruction is guaranteed because the post-verification
outcome – succeed or fail – becomes fixed upon Disperse completion. Each
validator arrives at the same outcome no matter which subset of $F+1$ (untampered)
shares are input.

AVID-M takes roughly $50 \mu s$ to \fino{SS.split()} and generate Merkle tree commitments,
and roughly $50 \mu s$ for \Retrieve{}, including post-verification, in a 6-out-of-16 scheme.

% -----*-------*---------*--------*---------------
\subsection*{Hybrid}

Even after \Disperse{} has completed, both VSS and AVID-M may need to interact with a specific set of $F+1$ honest validators during \Retrieve{}. 
This implies that the latency of \Retrieve{} is impacted by this specific set of $F+1$ share holders, and not the speed of the fastest 
$F+1$ honest validators. Employing threshold cryptography removes this dependency but incurs a costly computation. 

A Hybrid approach combines the benefits of threshold cryptography with AVID-M. In Hybrid, \Retrieve{} works with (fast) secret-sharing during steady-state, and falls back to threshold cryptography, avoiding waiting for specific $F+1$ validators, during a period of network instability. 

\Disperse{tx-key} is implemented in two parts. 
First, a user applies AVID-M to send validators individual shares and proofs. Second, as a fallback, it sends validators 
\fino{TE.Enc(tx-key)}. 
Once \fino{tx} is committed to the total ordering, \Retrieve{tx-key} has a fast track and a slow track. 
In the fast track, every validator that holds an AVID-M share reveals it. 
A validator that doesn’t hold an AVID-M share for \fino{tx-key} reveals a threshold key decryption share. 
In the slow track, validators may give up on waiting for AVID-M shares and reveal threshold key decryption shares, even if they already revealed AVID-M shares.

Post-verification after $F+1$ valid shares of either kind are revealed checks that \textbf{both} the AVID-M and threshold encryption were shared correctly and would have the same outcome.
More specifically, a validator both re-encrypts 
\fino{tx-key} and re-encodes the AVID-M Merkle tree after reconstruction. 
It compares both outcomes with the sender’s. If the comparison fails, the dealer was faulty and the validator rejects \fino{tx}. 
Binding holds because the post-verification outcome – succeed or fail – becomes fixed upon \Disperse{} completion. 
Each validator arrives at the same outcome no matter which scheme and what subset of $F+1$ (untampered) shares are retrieved, 
thus ensuring Binding.

\section{DAG transport}
\label{sec:dag}
% -----*-------*---------*--------*---------------

In a DAG-based BFT protocol, validators store messages delivered via reliable and causally ordered broadcast in a local graph. A message inserted into the local DAG has the following guarantees: 
%
% \begin{compactdesc}
%
        \textit{Reliability}:
there are copies of the message stored on sufficiently many participants, such that eventually, all honest validators can download it.
        \textit{Non-equivocation}:
messages by each validator are numbered. If a validator delivers some message as the k’th from a particular sender, 
then the message is authenticated by its sender and other validators deliver the same message as the sender k’th message. 
        \textit{Causal Ordering}:
the message carries explicit references to messages which the sender has previously delivered (including its own). Predecessors are delivered locally before the message. % is delivered.

% \end{compactdesc}

Note that the DAGs at different validators may be slightly different at any moment in time. This is inevitable in a distributed system due to message scheduling. However, a DAG-based Consensus protocol allows each participant to interpret its local DAG, reaching an autonomous conclusion that forms total ordering. Reliability, Non-equivocation and Causal Ordering make the design of such protocols extremely simple as we shall see below.

{\bf DAG API.}
A DAG transport exposes two basic API’s, \fino{broadcast()} and \fino{deliver().} 

\begin{compactdesc}

\item[Broadcast.]
\fino{broadcast()} is an asynchronous API for a validator to present payload for the DAG transport to be transmitted to all other validators. The main use of DAG messages in a DAG-based Consensus protocol is to pack meta-information on transactions into a block and present the block for broadcast. The DAG transport adds references to previously delivered messages, and broadcasts a message carrying the block and the causal references to all validators.

\item[Deliver.]
When another validator receives the message carrying the block, it checks whether it needs to retrieve a copy of any of the transactions in the block and in causally preceding messages. Once it obtains a proof-of-availability of all transactions in the block and its causal past, it can acknowledge it. A validator’s upcall \fino{deliver(m)} is triggered when sufficiently many acknowledgments for it are gathered, guaranteeing that the message itself, the transactions it refers to, and its entire causal past maintain Reliability, Non-equivocation and Causal Ordering. 

It is worth noting that in the protocol discussed in this paper, 
the Consensus protocol injects meta-information (e.g., complaints), DAG broadcast never stalls or waits for input from it.

\end{compactdesc}

{\bf Implementing a DAG.}
There are various ways to implement reliable, non-equivocating and causally-ordered broadcast among $N=3F+1 validators$, at most $F$ of which are presumed Byzantine faulty and the rest are honest.

\begin{compactdesc}

\item[Lifetime of a message.]
A validator packs transaction information and meta-information into a message,
adds references to previously delivered messages (including the sender's own preceding messages),
and broadcasts the message to all validators. 

\item[Echoing.] 
The key mechanism for reliability and non-equivocation is for validators to echo a digest of the first message they receive from a sender with a particular index. When $2F+1$ echoes are collected, the message can be delivered. There are two ways to echo, one is all-to-all broadcast over authenticated point-to-point channels a la Bracha Broadcast~\cite{bracha1987asynchronous}; 
the other is converge-cast with cryptographic signatures a la Rampart~\cite{reiter1994securely} and Cachin et al.~\cite{cachin2001secure}.
In either case, echoing can be streamlined so the amortized per-message communication is linear, which is anyway the minimum necessary to spread the message.

\item[Layering.]
Transports are often constructed in a layer-by-layer regime. In this regime, each sender is allowed one message per layer, and a message may refer only to messages in the layer preceding it. Layering is done so as to regulate transmissions and saturate network capacity, and has been demonstrated to be highly effective by various projects~\cite{danezis2018blockmania,gkagol2018aleph,danezis2022narwhal}.
We reiterate that Fino does not require a layered structure. 

\end{compactdesc}

\section{Fino}
\label{sec:fino}
% MEV Protection on a DAG
% --------* ---------- * ---------- * ----------- * ------------

Fino incorporates MEV protection into a \BFT protocol for the partial synchrony model, riding on a DAG transport.
\BFT validators periodically pack pending encrypted transactions into a batch and use the DAG transport to broadcast them.
The key insight for operating \OTR on a DAG is that
each view must wait until \Retrieve{} completes on transactions \fino{tx} of the previous view.

% --- * ----- * ------- * --------
\subsection{The Protocol}

{\bf Views.} The protocol operates in a view-by-view manner. 
Each view is numbered, as in \fino{view(r)}, and has a designated leader known to everyone.

{\bf View change.}
A validator enters view(r+1) when two conditions are met, \viewA and \viewB:
\viewA is satisfied when the local DAG of a validator contains $F+1$ valid votes on 
\fino{proposal(r)} or $2F+1$ valid \fino{complaint(r)} on \fino{view(r)}. 
\viewB is satisfied when every committed transaction \fino{tx} in the local DAG has $F+1$ valid shares revealed, hence \Retrieve{tx} can be completed.
Note that \viewB prevents \BFT validators that do not reveal (correct) shares from enabling the protocol to make progress without opening committed transactions.

{\bf Proposing.} 
When a leader enters a new view(r), it broadcasts \fino{proposal(r)}.\footnote{ 
Recall, broadcast() merely presents payload to be transmitted as scheduled by
the DAG transport, e.g., piggybacked on messages carrying other transaction info.}
Implicitly, \fino{proposal(r)} suggests to commit to the global ordering of
transactions all the messages in the causal history of the proposal.
A leader's \fino{proposal(r)} is {\it valid} if it is well-formatted and is justified in entering \fino{view(r)}.

{\bf Voting.} When a validator sees a valid leader proposal, it broadcasts \fino{vote(r)}.
A validator's \fino{vote(r)} is {\it valid} if it follows a valid \fino{proposal(r)}. 

{\bf Committing.} Whenever a leader's \fino{proposal(r)} has $F+1$ valid votes in the local DAG, 
the proposal and its causal history become {\it committed}.
The commit order induced by a commit decision is described below.

{\bf Share revealing.}
When a validator observes that a transaction \fino{tx} becomes committed, 
it starts \Retrieve{tx} and calls \fino{broadcast()} to present its share of the decryption key \fino{tx-key}.

{\bf Complaining.}
If a validator gives up waiting for a commit to happen in \fino{view(r)}, 
it broadcasts \fino{complaint(r)}.
Note, a \fino{vote(r)} by a validator that causally follows a \fino{complaint(r)} by the validator, if exists, is {\bf not} interpreted as a valid vote.

{\bf Ordering Commits.}
When a validator observes that a leader's \fino{proposal(r)} becomes committed, it orders newly committed transactions as follows:

\begin{compactenum}
\item Let \fino{r'} be the highest view \fino{r' < r} for which \fino{proposal(r')} is in the causal history of 
    \fino{proposal(r)}. \fino{proposal(r')} is recursively ordered.
\item The remaining causal predecessors of \fino{proposal(r)} which have not yet been ordered are appended to the committed sequence
     (within this batch, ordering can be done using any deterministic rule to linearize the partial ordering into a total ordering.)
\end{compactenum}

{\bf Opening Transactions.}
When a validator observes that a leader's \fino{proposal(r)} becomes committed, 
it decrypts every committed transaction \fino{tx} in its causal past 
that hasn't been decrypted already.
That is, let \fino{c} be the highest view \fino{c < r} for which \fino{proposal(r)} causally follows $F+1$ valid votes.
Transactions in the causal past of \fino{proposal(c)} are opened as follows:

%By the view-change rule (above), every transaction that causally precedes \fino{proposal(c)} 
%must have 2F+1 valid shares revealed and will become opened. 
%However, \fino{tx} could be from views earlier or later than \fino{view(c)}, 
%A unique decryption of \fino{tx} is determined as follows. 

\begin{compactenum}
\item Let \fino{r'} be the highest view \fino{r' < c} for which \fino{proposal(r')} is in the causal history of 
    \fino{proposal(c)}. \fino{proposal(r')} is recursively opened.
\item The remaining (committed) transactions \fino{tx} in \fino{proposal(c)}'s causal past 
    are opened using \Retrieve{tx} to either produce a key \fino{tx-key} that
    decrypts \fino{tx} or to reject \fino{tx}. 
\end{compactenum}

% -----*-------*---------*--------*---------------
\begin{figure*}[tb!]
  \centering
  \includegraphics[width=1\linewidth]{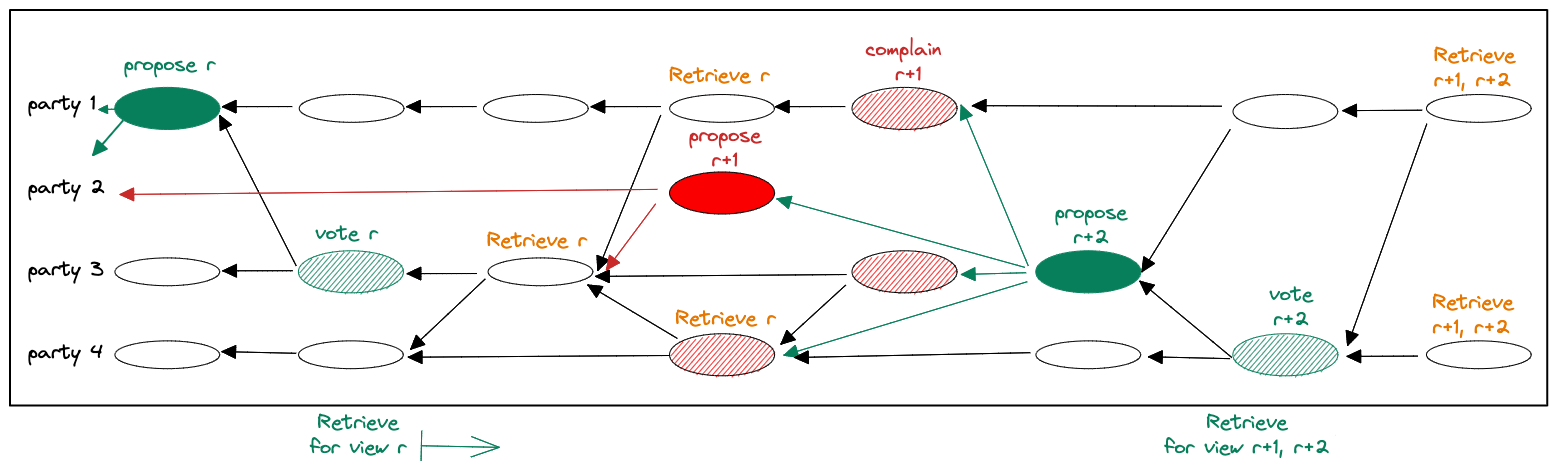}
    \caption{A commit of \fino{proposal(r)} is followed by share retrieval. A commit of \fino{proposal(r+2)} causes an indirect commit of \fino{proposal(r+1)}, followed by share revealing of both.}
  \label{fig:SS2}
\end{figure*}

\noindent
\autoref{fig:SS2} illustrates a couple of Fino scenarios.

{\bf Happy-path scenario.}
In the first view (view(r)), \fino{proposal(r)} becomes committed. 
The commit sets an ordering for transactions in the causal past of \fino{proposal(r)}, 
enabling retrieval of shares for transactions from \fino{proposal(r)}.

%\begin{figure}[bt!]
%  \centering
%  \includegraphics[width=.5\linewidth]{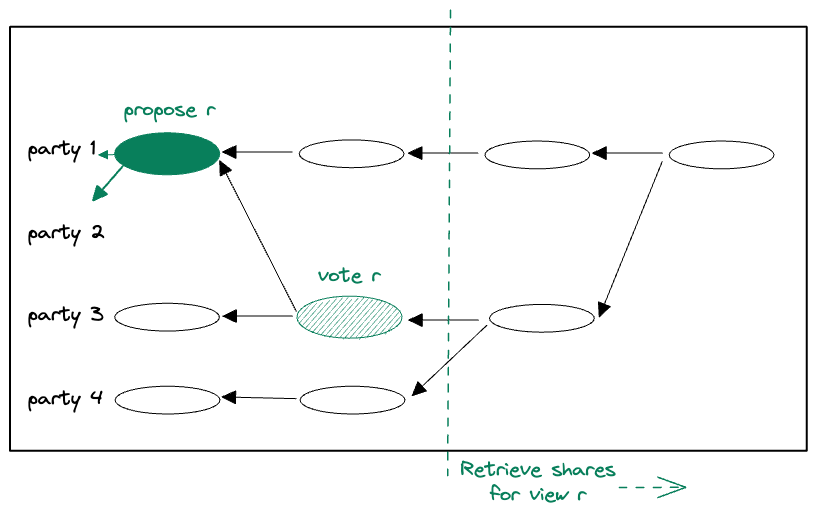}
%    \caption{Commit of \fino{proposal(r)} followed by share retrieval.}
%  \label{fig:SS1}
%\end{figure}

{\bf Scenario with a slow leader.}
A slightly more complex scenario occurs when a view expires because
validators do not observe a leader's proposal becoming committed and they broadcast complaints.
\autoref{fig:SS2} depicts this happening in view(r+1).
Entering view(r+2) is enabled by $2F+1$ complaints about view(r+1).
When \fino{proposal(r+1)} itself becomes committed, it indirectly commits \fino{proposal(r+1)} as well.
Thereafter, 
validators reveal shares for all pending committed transactions, namely, those in both \fino{proposal(r+1)} and \fino{proposal(r+2)}.

\section{Analysis}
\label{sec:analysis}
Fino is minimally integrated with a DAG transport. 
\BFT \OTR logic is embedded into the DAG structure 
simply by broadcasting (more precisely, injecting payloads into broadcasts) in the form of protocol proposals/votes/complaints and revealing shares. 
Importantly, 
at no time is DAG broadcast slowed down by the Fino protocol. 
The reliability and causality properties of the DAG transport make arguing about safety and liveness relatively easy. 

\subsection{Safety}

\begin{lemma} \label{lem:vote}
Assume a \fino{proposal(r)} ever becomes committed.
Denote by $r'$ the minimal view, where $r' > r$, such that \fino{proposal(r')} ever becomes committed.
Then for every $r \leq q < r'$, \fino{proposal(q+1)} causally follows \fino{proposal(r)}.
\end{lemma}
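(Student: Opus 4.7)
The plan is to induct on $q$ from $r$ up to $r'-1$, showing each \fino{proposal(q+1)} causally follows \fino{proposal(r)}, while carrying a strengthened invariant that every honest validator entering \fino{view(s)} for $r < s \leq q+1$ already has \fino{proposal(r)} in its local DAG at the moment of entry. The strengthened invariant is what lets the recursion traverse the intermediate views whose proposals never commit.

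For the base case $q = r$, I case-split on the certificate carried by the valid \fino{proposal(r+1)}. If it is $F+1$ valid votes on \fino{proposal(r)}, those votes causally follow \fino{proposal(r)}, and so does \fino{proposal(r+1)}, which causally follows its certificate. If it is $2F+1$ complaints on \fino{view(r)}, I invoke quorum intersection: because \fino{proposal(r)} ever committed, $F+1$ valid votes on it exist somewhere in the DAG, and $(2F+1)+(F+1)=3F+2>N=3F+1$, so some validator $V^{\star}$ lies in both sets. Since $V^{\star}$'s vote is valid it does not causally follow $V^{\star}$'s complaint; combined with the DAG's per-sender numbered ordering enforced by non-equivocation, $V^{\star}$'s complaint must causally follow $V^{\star}$'s vote, which itself causally follows \fino{proposal(r)}. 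Because $V^{\star}$'s complaint is part of the certificate, \fino{proposal(r+1)} causally follows \fino{proposal(r)}. The identical argument applies to any honest validator that uses one of these certificates to enter \fino{view(r+1)}, establishing the entry invariant at $s = r+1$.

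For the inductive step $r < q < r'$, the minimality of $r'$ forces \fino{proposal(q)} not to commit, so $F+1$ valid votes on \fino{proposal(q)} never materialize and the only admissible certificate for \fino{proposal(q+1)} is $2F+1$ complaints on \fino{view(q)}. These complaints contain at least $F+1$ honest ones, each issued by a validator that had previously entered \fino{view(q)} with \fino{proposal(r)} already in its DAG by the inductive hypothesis. Since a sender's later broadcast causally follows its earlier delivered state, each such honest complaint causally follows \fino{proposal(r)}, and hence so does \fino{proposal(q+1)}. The same reasoning extends the entry invariant to \fino{view(q+1)}, closing the induction.

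The main obstacle is the overlap argument in the base case. It leans on two subtle points that must be nailed down carefully: (i) the vote-validity rule together with the DAG's per-sender ordering must force the witness's complaint to causally dominate its vote \emph{even when the witness is Byzantine}, since the DAG numbers messages uniquely per sender; and (ii) quorum intersection holds against \emph{every} admissible $2F+1$-complaint certificate, so a potentially Byzantine \fino{view(r+1)} leader cannot cherry-pick complainers that avoid overlap with the voters for \fino{proposal(r)}. Once this step is secured, the inductive propagation through views whose proposals fail to commit is essentially mechanical.
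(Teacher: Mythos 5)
Your proof is correct, and your base case ($q=r$) is essentially the paper's argument: case-split on the certificate of \fino{proposal(r+1)}, and in the complaint case use quorum intersection ($2F+1$ complainers $\cap$ $F+1$ voters $\neq \emptyset$) plus the vote-validity rule and per-sender message ordering to force the witness's \fino{complaint(r)} to causally follow its \fino{vote(r)}. Where you diverge is in handling the intermediate views $r < q < r'$. The paper does not induct at all: it re-applies the same quorum intersection at every $q$, intersecting the $2F+1$ \fino{complaint(q)} set directly with the $F+1$ \fino{vote(r)} set and asserting that the common party's \fino{vote(r)} must precede its \fino{complaint(q)} "by definition" --- a step that stretches the validity rule, which as stated only relates \fino{vote(r)} to \fino{complaint(r)} in the \emph{same} view. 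You instead carry an entry invariant (every honest validator entering \fino{view(s)}, $r < s \leq q+1$, already holds \fino{proposal(r)}) and observe that the $\geq F+1$ honest complaints in any \fino{view(q)} certificate therefore causally follow \fino{proposal(r)}. This buys you a cleaner treatment of Byzantine complainers in later views (you only ever reason about honest senders there, at the cost of relying on the transport property that an honest broadcast references the sender's entire delivered past), while the paper's version is shorter and uniform across all $q$ but leans on the cross-view precedence claim you correctly flag as the delicate point. Both arguments ultimately rest on the same per-sender chain structure of the DAG; neither has a gap the other avoids entirely, and yours is a legitimate, arguably more carefully justified, alternative.
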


\begin{proof} (Sketch)
Since \fino{proposal(r)} becomes committed, by definition $F+1$ parties sent votes for it.
There are two possibilities for a leader's \fino{proposal(q+1)} to be valid. 
The first is \fino{proposal(q+1)} may reference $F+1$ valid \fino{vote(q)} messages.
This case occurs only for $q == r$ by the lemma assumption that $r' \geq q+1$ is the minimal view that ever becomes committed.
The second possibility can occur
for all $r \leq q < r'$, namely, \fino{proposal(q+1)} references $2F+1$ valid \fino{complaint(q)} messages.

In the first case, 
valid \fino{vote(r)} causally follows \fino{proposal(r)}, 
and a fortiori \fino{proposal(r+1)} causally follows \fino{proposal(r)}. 

In the second case, one of $2F+1$ \fino{complaint(q)} messages is sent by a party who sent a valid \fino{vote(r)}. 
By definition, \fino{vote(r)} must precede \fino{complaint(q)},
otherwise it is not considered a (valid) vote.
Hence, \fino{proposal(q+1)} causally follows \fino{complaint(q)} which follows \fino{proposal(r)}. 
\end{proof}

\begin{lemma} \label{lem:vote2}
If ever a \fino{proposal(r)} becomes committed, 
then every valid \fino{proposal(q)}, where \fino{q > r}, causally follows \fino{proposal(r)}.
\end{lemma}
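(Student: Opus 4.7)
The plan is to proceed by strong induction on $q$, using Lemma~\ref{lem:vote} and its proof technique as the main building block. First I would dispose of the base case $q = r+1$: by Lemma~\ref{lem:vote} (applied with $r' \geq r+1$, or, if no later view ever commits, using just the complaint-based branch of that lemma's proof, which does not actually require $r'$ to exist), any valid \fino{proposal(r+1)} causally follows \fino{proposal(r)}.

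For the inductive step, I would fix $q > r+1$, assume the claim holds for every valid \fino{proposal(q')} with $r < q' < q$, and consider the two ways a valid \fino{proposal(q)} can be justified. If it references $F+1$ valid \fino{vote(q-1)} messages, then those votes causally follow some valid \fino{proposal(q-1)}, which by the inductive hypothesis causally follows \fino{proposal(r)}; causality is transitive, so \fino{proposal(q)} does as well. If instead \fino{proposal(q)} references $2F+1$ valid \fino{complaint(q-1)} messages, I would replay the quorum-intersection argument from Lemma~\ref{lem:vote}: the $F+1$ parties that sent valid \fino{vote(r)} (which exist because \fino{proposal(r)} is committed) and the $2F+1$ complainers of \fino{view(q-1)} must overlap in at least one party; that party's \fino{vote(r)} causally precedes its own \fino{complaint(q-1)}, giving the chain \fino{proposal(q)} $\succ$ \fino{complaint(q-1)} $\succ$ \fino{vote(r)} $\succ$ \fino{proposal(r)}.

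The main obstacle I anticipate is the cross-view ordering of a validator's own \fino{vote(r)} and \fino{complaint(q-1)} in the complaint case: the paper explicitly states the invalidation rule only within the same view, yet Lemma~\ref{lem:vote} already invokes the analogous fact for $q \geq r$. I would justify this exactly as Lemma~\ref{lem:vote} does, combining the DAG transport's Causal Ordering guarantee (every message causally references its sender's prior messages) with the protocol's monotonic view advancement (the complainer was in \fino{view(q-1)} when sending the complaint, and was necessarily in \fino{view(r)} earlier when casting its vote), so that the vote is causally prior to the complaint.

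Once both inductive cases are handled, the conclusion of Lemma~\ref{lem:vote2} follows immediately. Because the argument never requires more than what Lemma~\ref{lem:vote} already establishes plus one round of induction, I would keep the written proof short and explicitly invoke Lemma~\ref{lem:vote} rather than re-deriving the quorum-intersection and voting-order facts from scratch.
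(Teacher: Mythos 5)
Your proof is correct, and it lands in essentially the same place the paper does: the paper's entire argument is ``simple induction on Lemma~\ref{lem:vote},'' and yours is an induction that leans on the same two facts (valid votes follow valid proposals; quorum intersection between the $F{+}1$ voters of \fino{view(r)} and $2F{+}1$ complainers). The one genuine difference is the shape of the induction. The paper's intended recursion runs along the chain of \emph{committed} proposals: apply Lemma~\ref{lem:vote} to cover every view up to the next committed view $r'$, note that \fino{proposal(r')} is itself committed, re-apply the lemma from $r'$, and chain causality. Your version instead does strong induction on the view number $q$ and re-opens the two-case justification analysis (votes vs.\ complaints) at every step, effectively re-deriving the body of Lemma~\ref{lem:vote} rather than invoking it as a black box over whole intervals. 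What your route buys is that you never have to reason about the possibly nonexistent ``next committed view'' at each stage, and your vote-justified case is cleaner (a valid \fino{vote(q-1)} follows a valid \fino{proposal(q-1)}, which the hypothesis already handles); the cost is repeating the quorum-intersection argument. You are also right to flag, and correct to resolve via the DAG's Causal Ordering of a sender's own messages together with monotone view advancement, the cross-view claim that a party's \fino{vote(r)} causally precedes its later \fino{complaint(q-1)} --- the paper uses this same fact in Lemma~\ref{lem:vote} while stating the vote-invalidation rule only for a single view, so your explicit justification is if anything more careful than the original.
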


\begin{proof} (Sketch)
The proof follows by a simple induction on Lemma~\ref{lem:vote}.
\end{proof}

\begin{lemma} \label{lem:agreement}
If an honest party commits \fino{proposal(r)},
and another honest party commits \fino{proposal(r')}, where $r' > r$, 
then the sequence of transactions committed by \fino{proposal(r)}
is a prefix of the sequence committed by \fino{proposal(r')}.
\end{lemma}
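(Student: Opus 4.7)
The plan is to reduce the statement to the structural properties already established in Lemma~\ref{lem:vote} and Lemma~\ref{lem:vote2}, combined with an induction over the recursive definition of the commit ordering. Let me first state a stronger auxiliary claim that I would prove and from which the lemma follows immediately: \emph{if \fino{proposal(r)} is committed and is in the causal history of some valid \fino{proposal(r'')} with $r'' > r$, then the sequence induced by the ordering rule applied to \fino{proposal(r)} is a prefix of the sequence induced by the ordering rule applied to \fino{proposal(r'')}}. Because the ordering rule is deterministic and depends only on the causal history of a proposal, and because Reliability, Non-equivocation, and Causal Ordering in the DAG guarantee that all honest parties see the same causal past for any delivered proposal, this implies the lemma for any two committed proposals at any two honest parties.

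The proof of the auxiliary claim proceeds by strong induction on $r''$. For the inductive step, apply the ordering rule to \fino{proposal(r'')}: let $q$ be the highest view with $q < r''$ such that \fino{proposal(q)} lies in the causal history of \fino{proposal(r'')}. By Lemma~\ref{lem:vote2}, \fino{proposal(r)} is in the causal history of \fino{proposal(r'')}, so $q \geq r$. In the base case $q = r$, the rule first recursively orders \fino{proposal(r)} and then appends the remaining, not-yet-ordered causal predecessors; thus the sequence from \fino{proposal(r)} appears verbatim as a prefix. In the case $q > r$, Lemma~\ref{lem:vote2} (applied with the committed \fino{proposal(r)} and the valid later \fino{proposal(q)}) tells us that \fino{proposal(r)} is in the causal history of \fino{proposal(q)}; by the inductive hypothesis the sequence from \fino{proposal(r)} is a prefix of the sequence from \fino{proposal(q)}, which in turn is a prefix of the sequence from \fino{proposal(r'')} by the first step of the recursion.

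Two small points I would be careful about. First, I need the recursive ordering to be well-defined on intermediate, possibly-uncommitted proposals encountered along the chain; this is fine because the rule only reads causal-history structure, not commit status, and honest validators observe identical causal pasts. Second, I need to justify that the ``remaining causal predecessors'' appended at each level do not overlap with transactions already ordered by the recursive call; this is by construction of the rule, which only appends predecessors that have not yet been ordered, and the deterministic linearization of this remainder is fixed in advance, so two honest parties produce identical sequences.

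The main obstacle, such as it is, lies in handling the case $q > r$: the intermediate proposal \fino{proposal(q)} need not be committed by any honest party, yet the argument must still conclude that \fino{proposal(r)} is sandwiched inside the recursive chain from \fino{proposal(r'')}. The resolution is exactly Lemma~\ref{lem:vote2}, which applies to \emph{every} valid later proposal and not only to committed ones, so the inductive hypothesis can be invoked on \fino{proposal(q)} without requiring $q$ to have been committed.
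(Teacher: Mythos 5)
Your proof is correct and follows essentially the same route as the paper's sketch: invoke Lemma~\ref{lem:vote2} to place the committed \fino{proposal(r)} on the recursive chain of the ordering rule applied from \fino{proposal(r')}, so that its (deterministically computed) sequence appears as a prefix. You simply make explicit the induction and the determinism/causal-consistency points that the paper's ``eventually the recursion gets to it'' leaves implicit.
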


\begin{proof} (Sketch)
When \fino{proposal(r')} becomes committed, 
the commit ordering rule is recursively applied to valid proposals in its causal past.
By Lemma~\ref{lem:vote2}, the (committed) \fino{proposal(r)} is a causal predecessor of every valid \fino{proposal(s)},
 for $r < s \leq r'$, and eventually the recursion gets to it.
\end{proof}

\begin{lemma} \label{len:unique}
% Unique opening.
    If ever a \fino{proposal(r)} becomes committed, then every committed transaction
    proposed by a valid \fino{proposal(r')}, where \fino{r' < r}, can be
    uniquely decrypted.
\end{lemma}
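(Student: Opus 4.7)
The plan is to reduce uniqueness of decryption to the Binding property of the \Disperse{}/\Retrieve{} framework, using Lemma~\ref{lem:vote2} to propagate the relevant preconditions. For any transaction \fino{tx} committed by \fino{proposal(r')} with \fino{r' < r}, I aim to establish two things: (i) \Disperse{tx-key} has completed at some honest validator, and (ii) consequently \Retrieve{tx-key} has a unique outcome.

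For (i), I would first note that since \fino{proposal(r)} is committed, at least $F+1$ valid votes for it exist in some local DAG, and hence at least one honest validator voted for it. An honest validator casts a vote only after validating the proposal, and this validity check includes confirming that \Disperse{} has completed locally for every transaction in the proposal's causal past. By Lemma~\ref{lem:vote2}, the committed \fino{proposal(r')} lies in the causal past of \fino{proposal(r)}, so all of its transactions are included. Thus \Disperse{tx-key} has completed at at least one honest validator for every such \fino{tx}.

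For (ii), I would directly invoke the Binding property stated in Section~\ref{sec:framework}: once \Disperse{tx-key} completes at an honest validator, the outcome of \Retrieve{tx-key} -- either reconstruction of a valid \fino{tx-key} that decrypts \fino{tx}, or a rejection -- is uniquely determined. Combined with the view-advance rule \viewB, which forces $F+1$ valid shares to be revealed for each committed transaction before the protocol can make further progress, this yields a well-defined and mutually agreed-upon decryption for every such \fino{tx}.

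The main obstacle will be reconciling Binding -- which is phrased in terms of $F+1$ \emph{honest} participants -- with \viewB, which only guarantees $F+1$ \emph{valid} shares, possibly including contributions from Byzantine validators. The resolution goes through AVID-M's post-verification step: tampered shares are detected by the Merkle-path check and discarded, and the accept/reject outcome is fixed at the moment \Disperse{} completes, independent of which particular untampered $F+1$ subset is ultimately combined. The same argument works symmetrically for the threshold-cryptography, VSS, and Hybrid instantiations of \Disperse{}/\Retrieve{} described in Section~\ref{sec:framework}, each of which offers an analogous per-share verification mechanism that makes the reconstruction outcome deterministic once dispersal has completed at an honest validator.
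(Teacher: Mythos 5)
Your proof is correct and arrives at the same conclusion by essentially the same mechanism (Binding plus the view-entry condition \viewB), but with a finer decomposition than the paper's sketch. The paper goes through Lemma~\ref{lem:agreement} (previously committed proposals form a prefix of the new committed history) and then attributes unique decryptability in one step to \viewB together with the guarantees of \Retrieve{}; you instead invoke Lemma~\ref{lem:vote2} directly and insert an explicit intermediate step showing that \Disperse{tx-key} has completed at an honest validator, which is the actual precondition of Binding and which the paper leaves implicit. Your closing paragraph, reconciling Binding's ``any $F+1$ \emph{honest} validators'' with \viewB's ``$F+1$ \emph{valid} shares'' via the AVID-M post-verification being fixed at dispersal time, addresses a mismatch that the paper's one-line appeal to \viewB glosses over; that is a genuine strengthening. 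One small misattribution to repair: the paper does not fold a \Disperse{}-completion check into the validity condition for voting (a proposal is valid merely if it is ``well-formatted and justified in entering \fino{view(r)}''). Completion of \Disperse{} is instead guaranteed by the DAG transport: \fino{deliver()} fires only after sufficiently many ($N-F$) acknowledgements of proof-of-availability for the block and its entire causal past, which coincides with the $N-F$-acknowledgement completion condition of \Disperse{} in \autoref{sec:framework}, and Causal Ordering ensures the transaction-carrying messages are delivered before any proposal referencing them. With that substitution your argument goes through.
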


\begin{proof} (Sketch)
    When \fino{proposal(r)} becomes committed, then by
    Lemma~\ref{lem:agreement}, all the transactions from previously committed
    proposals are part of the committed transaction history.  Moreover,
    committing \fino{proposal(r)} implies the view change (from \fino{r-1})
    which occurs only if the condition viewB (see \autoref{sec:fino}) is met,
    i.e., every committed transaction \fino{tx} in the local DAG can complete \fino{Retrieve(tx)}, hence
    can be uniquely decrypted (or rejected).

\end{proof}

\subsection{Liveness}

Liveness is guaranteed after GST, 
i.e., after communication has become synchronous with a known
$\Delta$ upper bound on transmission delays.

After GST, 
views are semi-synchronized through the DAG.
In particular, 
suppose that after GST, every broadcast by an honest party arrives at all honest
parties within $DD$.
Once a \fino{view(r)} with an honest leader is entered by the first honest party, 
within $DD$ all the messages seen by one party are delivered by both the leader and all other honest parties. 
Hence, within $DD$, all honest parties enter \fino{view(r)} as well. 
Within two additional DAG latencies, $2 \cdot DD$, the \fino{view(r)} proposal and votes from all honest parties are spread to everyone. 

Assuming view timers are set to be at least $3 \cdot DD$, 
once \fino{view(r)} is entered, 
a future view will not interrupt a commit. 
In order to start a future view, 
its leader must collect either $F+1$ \fino{vote(r)} messages, hence commit \fino{proposal(r)}; or $2F+1$ \fino{complaint(r)} expiration messages, which is impossible as argued above. 

% As for satisfying \fino{Reconstruct(tx)} for committed transactions \fino{tx}, 
% this is guaranteed, in the worst case, through threshold decryption.

\subsection{Communication complexity} 
Protocols for the partial synchrony model have unbounded worst case by nature, hence, we concentrate on the costs incurred during steady state when a leader is honest and communication with it is synchronous:

{\bf DAG message cost.} In order for DAG messages to be delivered reliably, it must implement reliable broadcast.
This incurs either a quadratic number of messages carried over authenticated channels, or a quadratic number of signature verifications, per broadcast. 
In either case, the quadratic cost may be amortized by pipelining, driving it in practice to (almost) linear per message.

{\bf Commit message cost.} Fino sends $F+1$ broadcast messages, a proposal and votes, per decision. 
A decision commits the causal history of the proposal, consisting of (at least) a linear number of messages. 
Moreover, each message may carry multiple transactions in its payload.
As a result, in practice the commit cost is amortized over many transactions.

\subsection{Latency}

{\bf Commit latency.} The commit latency in terms of DAG messages is 2, one proposal followed by votes.

{\bf Opening  latency.}
During periods of stability, there are no complaints about honest leaders by any honest party. 
If \fino{tx} is proposed by an honest leader in \fino{view(r)}, it will receive $F+1$ votes and become committed within two DAG latencies.
Within one more DAG latency, 
every honest party will post a message containing a share for \fino{tx}. 
Thereafter, whenever $F+1$ are available, everyone will be able to complete \Retrieve{tx} with a unique outcome. 

\subsection{Faulty User}
One drawback of the proposed protocol is that it assumes a well-connected and
honest user (dealer), able to entrust $F+1$ honest validators with their shares.
If the dealer is malicious or unable to broadcast shares it can lead to the
situation where the transport layer buffers a transaction indefinitely without being able to \fino{deliver()} it. 
The Hybrid approach we presented in \autoref{sec:framework} solves
this issue by using threshold encryption 
as a fallback mechanisms (i.e., whenever enough shares cannot be received,
parties would proceed with threshold decryption).
However, the hybrid construction suffers the slowness of threshold encryption to verify a unique outcome even in faultless executions.

\section{Evaluation (prelim)}
\label{sec:implementation}
% \subsection{Microbenchmarks of Transaction Blinding}

We implemented the proposed variants of the Fino Disperse and Retrieve
functionalities
and investigated the computational
overhead that these schemes introduce.
For secret sharing we implemented SSS~\cite{shamir1979share}, while for threshold encryption we implemented the
TDH2 scheme~\cite{shoup1998securing}.   We selected the
schemes with the most efficient cryptographic primitives we had access to, i.e.,
the secret sharing scheme uses the Ed25519
curve~\cite{bernstein2012high}, while TDH2 uses ristretto255~\cite{VGT+19}
as the underlying prime-order group.  Performance for both schemes is presented in a
setting where 6 shares out of 16 are required to recover the plaintext.

\begin{table}[t!]
\ifx\tics\undefined
    \small
\fi
\caption{Comparison of the implementations' performance.}
    \label{tab:perf}
\begin{tabular}{|l|r|r|r|r|}
\hline
Scheme & \texttt{Disperse} & \texttt{ShareGen} & \texttt{ShareVerify} &
    \texttt{Reconstruct} \\ \hline\hline
TDH2-based & 311.6$\mu s$ & 434.8$\mu s$ & 492.5$\mu s$ & 763.9$\mu s$  \\ \hline
SSS-AVID-M& 52.7$\mu s$ &   N/A   & 2.7$\mu s$ & 52.5$\mu s$  \\ \hline
Hybrid opt. & 360.3$\mu s$ &   N/A   & 2.7$\mu s$ & 361.5$\mu s$  \\ \hline
Hybrid pess. & 360.3$\mu s$ &   434.8$\mu s$   & 492.5$\mu s$ & 828.9$\mu s$  \\ \hline
\end{tabular}
\end{table}

The results presented in \autoref{tab:perf} are obtained on an Apple M1 Pro.
\texttt{Disperse}
refers to the overhead on the client-side while \texttt{ShareGen} is the operation of
deriving a decryption share from the TDH2 ciphertext (it is absent
in SSS and the optimistic path of the hybrid scheme).  In TDH2,
\texttt{ShareVerify} verifies if
a decryption share matches the ciphertext, while in SSS-based Fino it only
checks whether a share belongs to the tree aggregated by the signed Merkle root
attached by the client. \texttt{Reconstruct} recovers the plaintext from the ciphertext and
the number of shares and verifies the outcome integrity (i.e., AVID-M and
threshold re-encryption
verification, where applicable).

As demonstrated by these micro-measurements, the SSS-AVID-M scheme is the most
efficient:  in our blind ordering scenario using threshold encryption, each party processing a TDH2 ciphertext
would call \texttt{ShareGen} once to derive its decryption share,
\texttt{ShareVerify} \textit{k-1} times
to verify the threshold number of received shares, and \texttt{Reconstruct} once to obtain
the plaintext.  Assuming \textit{k=6}, the total computational overhead for a single
transaction to be recovered would take around 3.7ms CPU time. % This constitutes the CPU overhead
% required in the pessimistic path for a transaction.
%
With secret-sharing based Fino, the party would also call \texttt{ShareVerify} \textit{k-1} times and
\texttt{Reconstruct} once, which
requires only 66$\mu s$ CPU time.
The hybrid approach introduces an interesting trade-off between its properties
and performance. In the optimistic path, it requires around 378$\mu s$ CPU time
per transaction, while its fallback introduces negligible overhead to the plain
threshold-encryption scheme.

% The last operation that we evaluated is the verification, ensuring that the
% secret shares and threshold ciphertext produce the same output. In our setting,
% the verification in the optimistic case takes 338$\mu s$ CPU time on average.
% That means, that in the optimistic path  the total overhead would be around
% 355$\mu s$ per transaction, which is an order of magnitude faster than with the
% threshold encryption.  In the pessimistic case, verification requires 49$\mu
% s$ of CPU time on average, only slightly increasing the total computation time
% of the pessimistic path.
%
We emphasize that these are micro-benchmarks of the blinding/unblinding of transactions;
in the future, we plan to complete a performance evaluation of the entire DAG-based blind-ordering protocol.

% \subsection{Other Considerations}
% \psz{dropping it, since we use these two methods now}
%
% Besides the higher performance overhead, TDH2 requires a trusted setup, but the
% scheme also provides some advantages over secret sharing. For instance, a TDH2
% ciphertext may be sent only to a single party and the network will be able to recover the
% plaintext. An SSS-based requires the client to send shares directly to multiple parties.
% Meanwhile,
% waiting for the network to receive shares for a transaction,
% the transaction occupies buffer space on parties' machines and would need
% to be either expired by the parties (possibly violating liveness) or kept in the
% state forever (possibly introducing a denial-of-service vector).  Moreover, SSS
% requires a trusted channel between clients and parties, which is not required by
% TDH2 itself.  Finally, in lieu of secret-share verification in Fino,
% setting a deterministic subset of shares used for decrypting a transaction requires a consensus decision.

\section{Related Work and Discussion}
\label{sec:related}
Despite the MEV problem being relatively new, there already exist
consensus-related systems aiming at solving it. For instance, Flash Freezing Flash
Boys (F3B)~\cite{zhangflash} is a commit-and-reveal architecture, where clients
send TDH2-encrypted ciphertexts to be ordered by Consensus, and afterward to be
decrypted via shares released by a dedicated secret-management committee.
Other closely related techniques and schemes helpful in mitigating MEV are discussed below.

\subsection{Verifiable Secret Sharing (VSS)}
\label{sec:related:vss}

Although we ultimately end up forgoing verifiability/recoverability of shares, 
VSS is used in many settings like ours. The overall communication complexity
incurred in VSS on the dealer sharing a secret and on a party recovering a share
has dramatically improved in recent years.
VSS can be implemented inside the asynchronous echo broadcast protocol in $O(n^3)$
communication complexity using Pederson's original two-dimensional polynomial scheme
Non-Interactive Polynomial Commitments~\cite{pedersen1991non}.
Kate et al.~\cite{kate2010constant} introduce a VSS scheme with $O(n^2)$ communication complexity,
utilized for asynchronous VSS by Backes et al.~\cite{backes2013asynchronous}.
Basu et al.~\cite{basu2019efficient} propose a scheme with linear $O(n)$ communication complexity.

A related notion is robust secret sharing (RSS) introduced by
Krawczyk~\cite{krawczyk1993secret} and later revised by Bellare and
Rogaway~\cite{bellare2007robust}. RSS allows recoverability in case of
incorrect (not just missing) shares, however, these schemes assume an honest
dealer.  Duan et al.~\cite{duan2017secure} propose ARSS extending RSS to the
asynchronous setting.

\subsection{Time-Based Order-Fairness}

Blind Order-Fairness is achieved by deterministically ordering encrypted
transactions and then, after the order is final, decrypting them.
The deterministic order can be enhanced by sophisticated ordering logic
present in other protocols.
In particular, Fino can be extended to provide Time-Based Fairness
additionally ensuring that the received transactions are not only unreadable by
parties, but also their relative order cannot be influenced by malicious parties
(the transaction order would be defined by the time of its ciphertext arrival).

For instance, Pompē~\cite{zhang2020byzantine} proposes a property called {\em Ordering Linearizability}:
{\it ``if all correct parties timestamp transactions tx, tx' such that tx' has a
lower timestamp than tx by everyone, then tx' is ordered before tx.''}
It implements the property based on an observation that if parties
exchange transactions associated with their receiving timestamps, then for each
transaction its median timestamp, computed out of 2F+1 timestamps collected, is
between the minimum and maximum timestamps of honest parties.
Fino can be easily extended by the Linearizability property offered by Pompē and the final
protocol is similar to the Fino with Blind Order-Fairness (see above) with
only one modification. Namely,
every time a new batch of transactions becomes committed,
parties independently sort transactions by their aggregate (median) timestamps.

More generally, 
Fino can easily incorporate other Time-based Fairness ordering logic.
Note that in Fino, the ordering of transactions is determined on
encrypted transactions, but time ordering information should be open. 
The share revealing, share collection, and unique decryption following a committed ordering are the same as
presented previously. The final protocol offers much stronger properties
since it not only hides payloads of unordered transactions from parties,
but also prevents parties from reordering received transactions.

One form of Time-based Order-Fairness introduced in Aequitas~\cite{kelkar2020order} 
is {\em Batch-Order Fairness}:
{\it ``if sufficiently many (at least ½ of the) parties receive a transaction tx before
another transaction tx', then no honest party can deliver tx in a block after tx', ''}
and a protocol achieving it.
Other forms of Time-Based Order-Fairness which may be used in Fino include
``Wendy Grows Up''~\cite{kursawe2021wendy}, which introduced {\em Timed Relative Fairness},
{\it ``if there is a time t such that all honest parties saw (according to their
local clock) tx before t and tx' after t , then tx must be scheduled before tx', ''}
and 
``Quick Order Fairness''~\cite{cachin2021quick},
which defined {\em Differential-Order Fairness}, 
{\it ``when the number of correct parties that broadcast tx before tx' exceeds the
number that broadcast tx' before tx by more than $2F + \kappa$, for some $\kappa \geq 0$, then
the protocol must not deliver tx' before tx (but they may be delivered together).''}

Themis~\cite{kelkar2021themis} is a 
protocol realizing {\em Batch-Order Fairness}, where parties do not rely on timestamps
(as in Pompē) but only on their relative transaction orders reported.  Themis
can also be integrated with Fino, however, to make it compatible this design
requires some modifications to Fin's underlying DAG protocol. More
concretely, Themis assumes that the fraction of bad parties cannot be one quarter, i.e., F out of 4F+1. 
A leader makes a proposal based on 3F+1 out of 4F+1 transaction orderings (each
reported by a distinct party). Therefore, we would need to modify the DAG transport
so that parties reference 3F+1 preceding messages (rather than 2F+1).

% -----*-------*---------*--------*---------------
\subsection{DAG-based BFT}
There are many known \BFT solutions for partial synchrony, 
and more specifically, 
several recent solutions that ride on a DAG~\cite{danezis2022narwhal,giridharan2022bullshark,keidar2022cordial}.
When constructing Fino, we wanted to build MEV protection into a simple DAG-based \BFT solution, described below.
Notwithstanding, we remark that Fino's MEV protection can possibly be incorporated into other DAG-riding \BFT solutions.

We borrowed a subprotocol from Bullshark~\cite{giridharan2022bullshark} that deals with partial synchrony,
modifying it so that DAG transmissions would never stall waiting for \BFT protocol steps or timers.
Fino is different from the borrowed Bullshark component in that \BFT protocol 
steps (e.g., view changes, proposals, votes, and complaints) are injected into the DAG at any time, independent of DAG layers.
It uses \fino{broadcast()} (defined in \autoref{sec:pre}) to present these steps as payloads to the DAG transport, completely asynchronously,
while normal DAG transmissions continue.
A hallmark of the DAG-riding approach, which Fino preserves while adding Blind Order-Fairness, is zero message overhead.

Narwhal~\cite{danezis2022narwhal} is a recent DAG transport that has a layer-by-layer structure, each layer having at most one message per sender and referring to 2F+1 messages in the preceding layer. A similarly layered DAG construction appears earlier in
Aleph~\cite{gkagol2018aleph},
but it does not underscore the separation of transaction dissemination from DAG messages.
Narwhal-HS is a BFT Consensus protocol within~\cite{danezis2022narwhal} based on
HotStuff~\cite{yin2019hotstuff} for the partial synchrony model,
in which Narwhal is used as a "mempool". 
In order to drive Consensus decisions, 
Narwhal-HS adds messages outside Narwhal, 
using the DAG only for spreading transactions.

DAG-Rider~\cite{keidar2021all} and
Tusk~\cite{danezis2022narwhal}
build randomized BFT Consensus for the asynchronous model riding on Narwhal, 
These protocols are zero message overhead over the DAG, not exchanging any messages outside the Narwhal protocol.
Both are structured with purpose-built DAG layers grouped into "waves" of 4 (2) layers each. 
Narwal waits for the Consensus protocol to inject input value every wave, though in practice, this does not delay the DAG materially. 

Bullshark~\cite{giridharan2022bullshark} builds BFT Consensus riding on Narwhal for the partial synchrony model.
It is designed with 8-layer waves driving commit, each layer purpose-built to serve a different step in the protocol.
Bullshark is a "zero message overhead" protocol over the DAG, however, 
due to a rigid wave-by-wave structure, 
the DAG is modified to wait for Bullshark timers/steps to insert transactions into the DAG.
In particular, if leader(s) of a wave are faulty or slow, some DAG layers wait to fill until consensus timers expire.

More generally, since the eighties, causally ordered reliable broadcast has been utilized in forming 
distributed Consensus protocols, e.g.,~\cite{birman1987exploiting,peterson1989preserving,melliar1990broadcast,moser1999byzantine,amir1991transis,dolev1993early}. 

The notion of the secure, causal, reliable broadcast was introduced by Reiter and Birman~\cite{reiter1994securely}, and later refined by Cachin~\cite{cachin2001secure} and Duan et al.~\cite{duan2017secure}.
This primitive was utilized in a variety of BFT replicated systems, 
but not necessarily in the form of zero message overhead protocols riding on a DAG. 
Several of these pre blockchain-era BFT protocols are DAG based, 
notably Total~\cite{moser1999byzantine} and ToTo~\cite{dolev1993early}, both of which are BFT solutions for the asynchronous model.

\section{Conclusions}
\label{sec:conclusions}
In this paper, we show how DAG-based BFT protocols can be enhanced
to mitigate MEV -- arguably one of the main threats to the success of cryptocurrencies.
We investigate the design space of achieving Blind Order-Fairness, and we
propose an approach that focuses on practicality. We present preliminary correctness
and performance results which indicate that our scheme is secure and efficient.
In the future, we plan to extend our analysis and experiments, and investigate 
other cryptographic tools which can improve our scheme.

\ifx\tics\undefined
    \section*{Acknowledgment}
    We are grateful to Soumya Basu, Christian Cachin, Ari Juels, Mahimna Kelkar, Lefteris Kokoris-Kogias, Oded Naor, Mike Reiter and Nibesh Shrestha for many comments that helped improve this writeup.
\fi

\bibliography{ref}

\ifx\tics
    \newpage
\fi
%\appendix

% by now, this would need updating; not sure it's worth it
%\newpage
%\section{Fino Pseudo-Code}
%\label{app:fino}
%\input{sec/pcode}

\end{document}